\newcommand{\jyoti}[1]{\textcolor{black}{#1}}
    \newtheorem{theorem}{Theorem}
    \newtheorem{corollary}[theorem]{Corollary}
    \newtheorem{lemma}[theorem]{Lemma}
\title{Minimum Ply Covering of Points with Unit Squares\footnote{This work is supported in part by the Natural Sciences and Engineering Research Council of Canada (NSERC).}} 
\author[1]{Stephane Durocher}
\author[2]{J. Mark Keil}
\author[2]{Debajyoti Mondal}
\affil[1]{University of Manitoba, Winnipeg, Canada\\

  \texttt{durocher@cs.umanitoba.ca}} 
\affil[2]{University of Saskatchewan, Saskatoon, Canada\\

  \texttt{keil@cs.usask.ca, dmondal@cs.usask.ca}}
\newcommand{\ply}{\operatorname{ply}}
\newcommand{\plycover}{\operatorname{plycover}}
\newcommand{\mpcn}{minimum ply cover number\xspace}
\DeclareMathOperator*{\argminA}{arg\,min} 
\begin{document}

\maketitle

\begin{abstract}
Given a set $P$ of points and a set $U$ of axis-parallel unit squares in the Euclidean plane, a minimum ply cover of $P$ with $U$ is a subset of $U$ that covers $P$ and minimizes the number of squares that share a common intersection, called the minimum ply cover number of $P$ with $U$. Biedl et al.~[Comput. Geom., 94:101712, 2020] showed that determining the minimum ply cover number for a set of points by a set of axis-parallel unit squares is NP-hard, and gave a polynomial-time 2-approximation algorithm for instances in which the minimum ply cover number is constant. The question of whether there exists a polynomial-time approximation algorithm remained open when the minimum ply cover number is $\omega(1)$. We settle this open question and present a polynomial-time \jyoti{$(8+\varepsilon)$-approximation algorithm for the general problem, for every fixed $\varepsilon>0$}.
\end{abstract}

\section{Introduction}
The \emph{ply} of a set $S$, denoted $\ply(S)$, is the maximum cardinality of any subset of $S$ that has a non-empty common intersection. The set $S$ \emph{covers} the set $P$ if $P \subseteq \bigcup_{S_i \in S} S_i$. Given sets $P$ and $U$, a subset $S \subseteq U$ is a \emph{minimum ply cover} of $P$ if $S$ covers $P$ and $S$ minimizes $\ply(S)$ over all subsets of $U$. Formally:
\begin{equation}\label{eqn:plyCoverDef}
    \plycover(P,U) = \argminA_{\substack{S\subseteq U\\ \text{$S$ covers $P$}}}  \ply(S).
\end{equation}

The ply of such a set $S$ is called the \emph{minimum ply cover number} of $P$ with $U$, denoted $\ply^*(P,U)$. Motivated by applications in covering problems, including interference minimization in wireless networks, Biedl et al.~\cite{DBLP:journals/comgeo/BiedlBL21} introduced the \emph{minimum ply cover problem}: given sets $P$ and $U$, find a subset $S \subseteq U$ that minimizes \eqref{eqn:plyCoverDef}. They showed that the problem is NP-hard to solve exactly, and remains NP-hard to approximate by a ratio less than two when $P$ is a set of points in $\mathbb{R}^2$ and $U$ is a set of axis-aligned unit squares or a set of unit disks in $\mathbb{R}^2$. They also provided 2-approximation algorithms parameterized in terms of $\ply^*(P,U)$ for unit disks and unit squares in $\mathbb{R}^2$. Their algorithm for axis-parallel unit squares runs in $O((k+|P|) (2\cdot |U|)^{3k+1})$ time, where $k=\ply^*(P,U)$, which is polynomial when $\ply^*(P,U) \in O(1)$. Biniaz and Lin~\cite{DBLP:conf/cccg/BiniazL20} generalized this result for any fixed-size convex shape and obtained a 2-approximation algorithm  when $\ply^*(P,U) \in O(1)$. The problem of finding a polynomial-time approximation algorithm to the minimum ply cover problem remained open when the minimum ply cover number, $\ply^*(P,U)$, is not bounded by any constant. \jyoti{This open problem is relevant to the motivating application of interference minimization. For example, algorithms for constructing a connected network on a given set of wireless nodes sometimes produce a network with high interference~\cite{DBLP:conf/cccg/DurocherM17}. Selecting a set of network hubs that minimizes interference relates to the ply covering problem in a setting where ply may not be a constant.} 
 
Given a set $P$ and a set $U$ of subsets of $P$, the \emph{minimum membership set cover problem}, introduced by Kuhn et al.~\cite{kuhn2005}, seeks to find a subset $S \subseteq U$ that covers $P$ while minimizing the maximum number of elements of $S$ that contain a common point of $P$. A rich body of research examines the minimum membership set cover problem (e.g., \cite{DBLP:journals/siamcomp/DemaineFHS08,DBLP:journals/algorithmica/MisraMRSS13}). The minimum ply cover problem is a generalization of the minimum membership set cover problem: $U$ is not restricted to subsets of $P$, and ply is measured at any point covered by $U$ instead of being restricted to points in $P$. Consequently, the cardinality of a minimum membership set cover is at most the cardinality of a minimum ply cover. Erlebach and van Leeuwen~\cite{DBLP:conf/soda/ErlebachL08} showed that the minimum membership set cover problem remains NP-hard when $P$ is a set of points in $\mathbb{R}^2$ and $U$ are unit squares or unit disks. For unit squares, they gave a 5-approximation algorithm for instances where the optimum objective value is bounded by a constant.  Improved approximation algorithms are found in~\cite{DBLP:journals/dmaa/BasappaD18} and~\cite{DBLP:conf/approx/ErlebachL10}. A rich body of literature studies geometric set cover problems. We refer the readers to~\cite{DBLP:series/lncs/AgarwalEF19,DBLP:conf/compgeom/ChanH20} for more details on  geometric set cover problems.

\begin{figure}[pt]
    \centering
    \includegraphics[width=\textwidth]{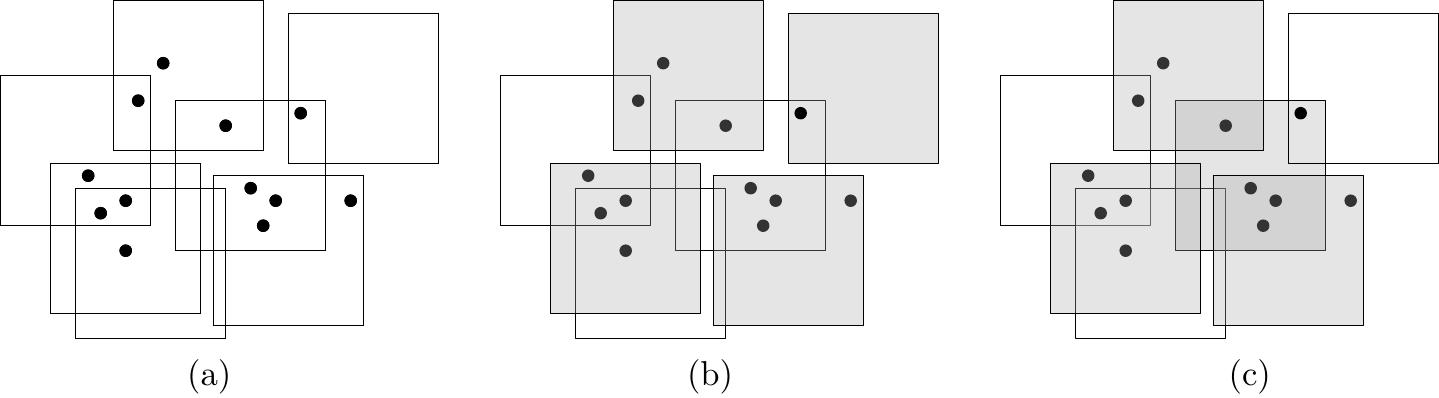}
    \caption{(a) An input consisting of points and unit squares. (b) A covering of the points with ply 1, which is also the \mpcn for the given input. (c)  A covering of the points with ply 2.}
    \label{fig:intro}
\end{figure}

\textbf{Our contribution:} In this paper we consider the minimum ply cover problem for a set $P$ of points in $\mathbb{R}^2$ with a set $U$ of axis-aligned unit squares in $\mathbb{R}^2$. We show that \jyoti{for every fixed $\varepsilon>0,$} the \mpcn can be approximated in polynomial time for unit squares within a factor of \jyoti{$(8+\varepsilon)$}. This settles an open question posed in~\cite{DBLP:journals/comgeo/BiedlBL21}  and~\cite{DBLP:conf/cccg/BiniazL20}. 

\jyoti{Our algorithm overlays a regular grid on the plane and then approximates the ply cover number  from the exact solutions  for these grid cells. The most interesting part of the algorithm is to model the idea of bounding the ply cover number with a set of budget points, and to exploit this set's geometric properties to enable dynamic programming to be applied. We show that one can set budget at the corners of a grid cell and check for a solution where the number of squares hit by a corner does not exceed its assigned budget.   A major challenge to solve this decision problem is that the squares that hit the four corners may mutually intersect to create a ply that is bigger than any budget set at the corners. This  makes it difficult to combine the solutions to the subproblems while designing a dynamic program. We show that an optimal solution can take a few well-behaved forms that can be leveraged to describe the subproblems concisely and to  combine   subproblem solutions to answer the decision problem.} 



\section{Minimum Ply Covering with Unit Squares}
\label{unitsquare}
Let $P$ be a set of points in $\mathbb{R}^2$ and let $U$ be a set of axis-parallel squares in general position in $\mathbb{R}^2$. I.e., no two squares in $U$ have edges that lie on a common vertical or horizontal line. In this section we give a polynomial-time algorithm to approximate the \mpcn for $P$ with $U$. The idea of the algorithm is as follows. 

We consider a unit grid $\mathcal{G}$ over the point set $P$. The rows and columns of the grid are aligned with the $x$- and $y$-axes, respectively, and each cell of the grid is a unit square. We choose a grid that is in general position relative to the squares in $U$. A grid cell is called \emph{non-empty} if it contains some points of $P$. We prove that one can first find a small ply cover for each non-empty grid cell $R$ and then combine the solutions to obtain an approximate solution for $P$. We only focus on the ply inside $R$, because if the ply of a minimum ply cover is realized outside $R$, then there also exists a point inside $R$ giving the same ply number.

We first show how to find a small ply cover when the points are bounded inside a unit square (Section~\ref{sq}), and then show how an approximate ply cover number can be computed for $P$ (Section~\ref{gen}).

\subsection{Points are in a Grid Cell}
\label{sq}
Let $R$ be a $1 \times 1$ grid cell. Let $Q \subseteq P$ be the set of points in $R$, and let $W\subseteq U$ be the set of squares that intersect $R$. Note that by the construction of the grid $\mathcal{G}$, every square in $W$ contains exactly one corner of $R$. 
We   distinguish some cases depending on the position of the squares in $W$. In each case we show how to compute either the minimum ply cover or a ply cover of size at most four more than the minimum ply cover number in polynomial time.


\subsubsection*{Case 1 (A corner of $R$ intersects all squares in $W$)} In this case we compute a  minimum ply cover. Without loss of generality assume that the top right corner of $R$ intersects all the squares in $W$.  We now can construct a minimum ply cover by 
 the following greedy algorithm.

\begin{enumerate}
    \item[] Step 1:  Let $z$ be the   leftmost uncovered point of $Q$. Find the square $B\in W$  with the lowest bottom boundary among the squares that contain $z$ (break ties arbitrarily). 
    \item[] Step 2: Add $B$ to the solution,  remove the points covered by $B$. 
    
    \item[] Step 3: Repeat Steps 1 and 2 unless all the points are covered.
\end{enumerate}

\begin{figure}
    \centering
    \includegraphics[width=\textwidth]{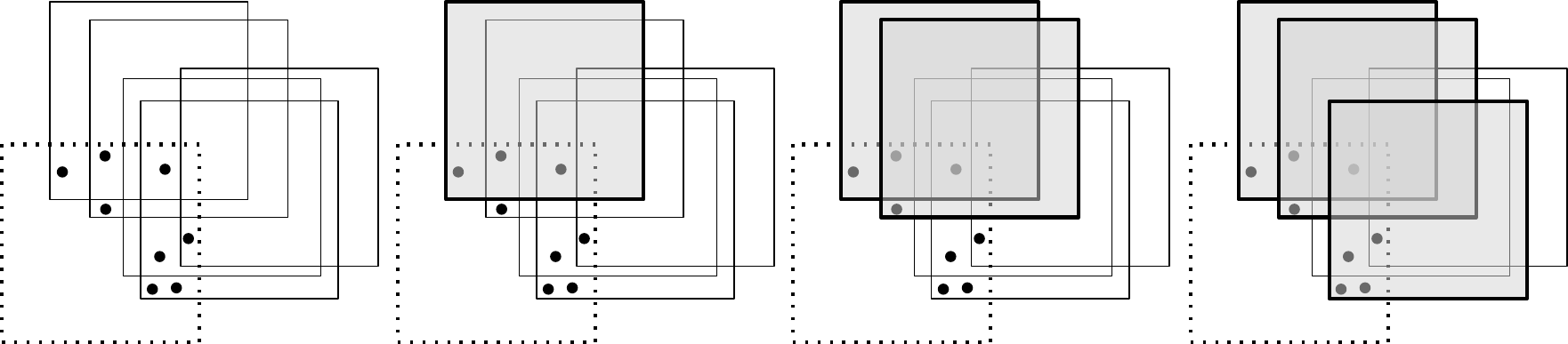}
    \caption{Illustration for Case 1. The squares taken in the solution are shaded in gray.}
    \label{fig:case1}
\end{figure}
Figure~\ref{fig:case1} illustrates such an example for Case 1. It is straightforward to compute such a solution in $O((|W|+|Q|)\log^2(|W|+|Q|))$ time using standard dynamic data structures, i.e, the point $z$ can be maintained using a range tree and the square $B$ can be maintained by leveraging dynamic segment trees~\cite{DBLP:journals/jacm/KreveldO93}.

To verify the correctness of the greedy algorithm, first observe that in this case the number of squares in a minimum cardinality cover coincides with the minimum ply cover. We now show that the above greedy algorithm constructs a minimum cardinality cover. We employ an induction on the number of squares in a minimum cardinality cover.  Let  $W_1,W_2,\ldots,W_k$ be a  set of squares in the minimum cardinality cover. First consider the base case where $k=1$. Since $W_1$ covers all the points, it also covers $z$. Since $z$ is the leftmost point and since our choice of square $B$ has the lowest bottom boundary, $B$ must cover all the points. Assume now that if the minimum cardinality cover contains less than $k$ squares, then the greedy algorithm constructs a minimum cardinality cover. Consider now the case when we have $k$ squares in the minimum cardinality cover. Without loss of generality let $W_1$ be a square that contains $z$. Then any point covered by $W_1$ would also be covered by the greedy choice $B$. We can now apply the induction hypothesis to construct a cover of size at most $(k-1)$ for the points not covered by $B$.

\subsubsection*{Case 2 (Two consecutive corners of $R$ intersect  all the squares in $W$)} 
In this case we compute a  minimum ply cover. Without loss of generality assume that the top left and top right corners of $R$ intersect all the squares in $W$. Let $W_\ell$ and $W_r$ be the squares of $W$ that intersect the top left corner and top right corner, respectively. We   construct a minimum ply cover by considering whether a square of $W_\ell$ intersects a square of $W_r$.

If the squares of $W_\ell$ do not intersect the squares of $W_r$, then we can reduce it into two subproblems of type Case 1. We solve them  independently and it is straightforward to observe that the resulting solution yields a minimum ply cover. Similar to Case 1, here we need $O((|W|+|Q|)\log^2(|W|+|Q|))$ time. 
Consider now the case when some squares in  $W_\ell$  intersect  some squares of $W_r$. To solve this case, we consider the following property of a minimum ply cover. 

\begin{lemma}\label{lem:disjoint}
Let $S$ be a minimum ply cover of the  points in $R$ such that every square in $S$ is necessary. In other words, if a square of $S$ is removed, then the resulting set cannot cover all the points of $R$. Then 
one can remove at most one square from $S$ to ensure that the squares of $S\cap W_\ell$ do not intersect the squares in $S\cap W_r$, where $W_\ell$ and $W_r$ are the squares that contain the top left and top right corners of $R$, respectively. 
\end{lemma}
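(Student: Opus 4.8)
The plan is to normalize coordinates, reduce the lemma to a statement about one extremal square from each side, and then use irredundancy to force a contradiction in the only hard case.

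First I would set $R=[0,1]^2$ with top-left corner $(0,1)$ and top-right corner $(1,1)$. Since $S$ is irredundant it cannot contain a square disjoint from $R$, so $S\subseteq W$; and since the two top corners lie at distance $1$ while the grid is in general position with respect to $U$, no unit square contains both corners, so $W$ is the disjoint union $W_\ell\cup W_r$ and hence $S=S_\ell\cup S_r$ with $S_\ell=S\cap W_\ell$, $S_r=S\cap W_r$. I would then record the elementary geometry: a square $A\in W_\ell$ meets $R$ in the lower-left staircase $\{(x,y)\in R:\ x\le\rho(A),\ y\ge\beta(A)\}$, where $\rho(A)$ is the abscissa of its right edge and $\beta(A)$ the ordinate of its bottom edge; symmetrically $B\in W_r$ meets $R$ in $\{(x,y)\in R:\ x\ge\lambda(B),\ y\ge\beta(B)\}$. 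Because the vertical extent of every such square contains the line $y=1$, a left square $A$ and a right square $B$ intersect if and only if $\lambda(B)\le\rho(A)$.

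Second, I would prove a domination observation: in the irredundant cover $S$, the square $A^\star\in S_\ell$ with the largest $\rho$ also has the largest $\beta$ among $S_\ell$ (otherwise its staircase would contain that of another square of $S_\ell$, contradicting necessity), and symmetrically the square $B^\star\in S_r$ with the smallest $\lambda$ has the largest $\beta$ among $S_r$. Write $x_\ell=\rho(A^\star)$ and $x_r=\lambda(B^\star)$. If $x_\ell<x_r$, then every left square's right edge lies strictly left of every right square's left edge, so $S_\ell$ and $S_r$ are already disjoint and no square need be removed. Otherwise $x_\ell\ge x_r$; call a left square \emph{crossing} if its right edge is at abscissa $\ge x_r$, and a right square \emph{crossing} if its left edge is at abscissa $\le x_\ell$. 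Note $A^\star$ and $B^\star$ are crossing, and (using $x_r\le x_\ell$) any intersecting left–right pair must contain a crossing square. If $S_\ell$ has a unique crossing square, it is $A^\star$, and deleting $A^\star$ leaves every remaining left square with right edge strictly left of $x_r\le$ every right square's left edge, so $S_\ell$ and $S_r$ become disjoint; symmetrically, if $S_r$ has a unique crossing square, delete $B^\star$. It therefore remains to rule out the case that $S_\ell$ has at least two crossing squares \emph{and} $S_r$ has at least two crossing squares.

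The hard part is exactly this last case, and I expect it to be the main obstacle; I would attack it with the private points supplied by irredundancy. Let $p\in Q$ be a point that $S$ covers only through $A^\star$. Any other square of $S_\ell$ whose right edge has abscissa $\ge p_x$ would also cover $p$, because $p_y\ge\beta(A^\star)$ is at least the bottom ordinate of every square of $S_\ell$; hence $p_x$ strictly exceeds the right-edge abscissa of every left square except $A^\star$, and since a second crossing left square exists this forces $p_x>x_r=\lambda(B^\star)$. Then $p$ can avoid $B^\star$ only via $p_y<\beta(B^\star)$, so $\beta(A^\star)\le p_y<\beta(B^\star)$. Running the mirror-image argument on a point covered only by $B^\star$, and using that a second crossing right square exists, gives $\beta(B^\star)<\beta(A^\star)$, a contradiction. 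This pair of symmetric deductions — the only place where the hypothesis of a second crossing square on each side is used — is the crux; everything else is routine bookkeeping about the staircase shapes and the general-position assumption.
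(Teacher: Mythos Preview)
Your proof is correct and takes a cleaner route than the paper's. The paper fixes two left squares $A,B$ that each meet some right square, notes that irredundancy forces $A$'s right edge to cross $B$'s bottom edge at a point $r$, and then case-splits on how many right squares meet $B$ and whether they contain $r$; in each subcase it either exhibits an unnecessary square or names the one square to delete. You instead pick the extremal square on each side ($A^\star$ with maximal $\rho$ in $S_\ell$, $B^\star$ with minimal $\lambda$ in $S_r$), observe via irredundancy that these also have the highest bottom edge on their side, and then use their private witness points to force the dichotomy: either one side has a unique crossing square (delete it), or both sides have at least two, which yields $\beta(A^\star)<\beta(B^\star)$ and symmetrically $\beta(B^\star)<\beta(A^\star)$. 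Your argument avoids the figure-driven subcases and makes the role of the private points completely explicit; the paper's version stays closer to the pictures. One small remark: your opening claim $S=S_\ell\cup S_r$ relies on the Case~2 hypothesis that every square of $W$ hits a top corner, whereas the lemma is later reused in Case~3 per adjacent-corner pair; but your actual contradiction in the last paragraph only uses the staircase shape of $S_\ell$ and $S_r$ and the existence of private points, so it goes through unchanged even when $S$ contains squares from other corners.
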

\begin{proof}
If there exists only one square in $S\cap W_\ell$ that intersects some squares in $S\cap W_r$ (or, vice versa), then we can remove that square and the claim holds.

Suppose now for a contradiction that there exist at least two squares in $S\cap W_\ell$ that intersect some squares in $S\cap W_r$. Let  $A$ and $B$ be the topmost such squares in $S\cap W_\ell$, 
i.e., each of $A$ and $B$ intersects one or more  squares from $S\cap W_r$. Since $A$ and $B$ belong to $S$ and since every square in $S$ is necessary for the cover, each of $A$ and $B$ must cover at least one point that is not covered by any other square in $S$. Therefore, without loss of generality we can assume that the right boundary of $A$ intersects the bottom boundary of $B$,  e.g., Figure~\ref{fig:2-corner}(a). Let $r$ be the point of intersection. 

If only one square $C$ in $S\cap W_r$ intersects $B$ and includes $r$, then we can safely remove $B$ from $S$,  e.g., Figure~\ref{fig:2-corner}(a). This contradicts our assumption that all the squares of $S$ are necessary. 

If two squares $C$ and $D$ in $S\cap W_r$ intersect $B$ and $B$ contains their intersection point, then at least one of $C$ and $D$ can be safely removed,  e.g., Figure~\ref{fig:2-corner}(b).  This contradicts our assumption that all the squares of $S$ are necessary.

If two squares $C$ and $D$ in $S\cap W_r$ intersect $B$ and $B$ does not contain their intersection point, then we consider the following  scenario. If $A$ intersects $C$ or $D$, then that would make $B$ unnecessary. Therefore, we may assume that $A$ intersects some other square $E$ from $S\cap W_r$, as shown in dashed in Figure~\ref{fig:2-corner}(c)--(d). But then $B$ and $E$ cannot both be necessary. Therefore, such a scenario cannot appear. 

The remaining case is when only one square $C$ in $S\cap W_r$ intersects $B$ where $C$ does not  contain $r$. Note that  $A$ has to intersect a square from $S\cap W_r$ and $B$ cannot intersect two squares from $S\cap W_r$. Therefore,  $A$ must  intersect $C$, e.g., Figure~\ref{fig:2-corner}(e). Here, removing $B$ from $S$ will ensure that the squares of $S\cap W_\ell$ do not intersect the squares in $S\cap W_r$. Since $B$ is necessary, this will leave some points uncovered. However, the lemma statement only asks for a partition of the squares and that property holds.
\end{proof}
\begin{figure}[h]
    \centering
    \includegraphics[width=\textwidth]{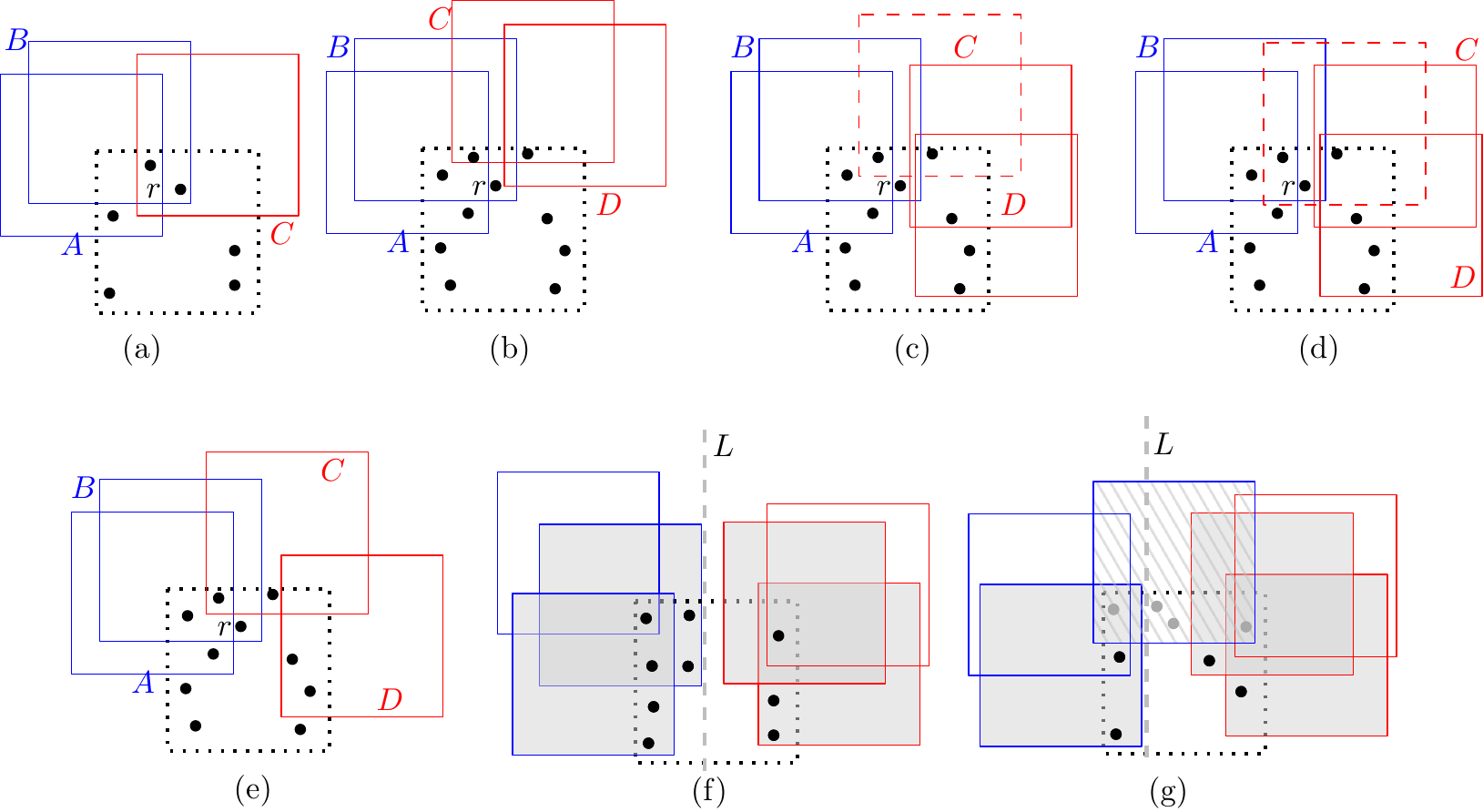}
    \caption{(a)--(e) Illustration for Lemma~\ref{lem:disjoint}, where 
    $(S\cap W_\ell)$ and   $(S\cap W_r)$ are shown in blue and red, respectively. (f)--(g) Illustration for the properties $C_1$ and $C_2$.}
    \label{fig:2-corner}
\end{figure}

By Lemma~\ref{lem:disjoint}, there exists a minimum ply cover $S$ such that at least one of the following two properties hold:

\begin{enumerate}
    \item[$C_1$] There exists a vertical line $L$ that passes through the left or right side of some square  and  separates $S\cap W_\ell$ and $S\cap W_r$, as illustrated in Figure~\ref{fig:2-corner}(f). 
    \item[$C_2$] There exists a square $M$ in $S$ such that after the removal of $M$ from $S$, one can find a vertical line $L$  that passes through the left or right side of some square and   separates $S\cap W_\ell$ and $S\cap W_r$. This is illustrated in Figure~\ref{fig:2-corner}(g), where the square $M$ is shown with the falling pattern. 
\end{enumerate}

To find the minimum ply cover, we thus try out all possible $L$ (for $C_1$), and all possible $M$ and $L$ (for $C_2$). More specifically, to consider $C_1$,  for each vertical line $L$ passing through the left or right side of some square in $W$, we independently find the minimum ply cover for the points and squares on the left halfplane of $L$ and right halfplane of $L$. We then construct a ply cover of $Q$ by taking the union of these two minimum ply covers.  

To consider $C_2$, for each square $M$, we first delete $M$ and the points it covers. Then for each vertical line $L$ determined by the squares in $(W\setminus M)$, we independently find the minimum ply cover for the points and squares on the left halfplane of $L$ and right halfplane of $L$. We then construct a ply cover of $Q$ by taking the union of these two minimum ply covers and $M$.  Finally, among all the ply covers constructed considering $C_1$ and $C_2$, we choose the ply cover with the minimum ply as the minimum ply cover of $Q$.

Since there are $O(W)$ possible choices for $L$ and $O(W)$ possible choices for $M$, the number of ply covers that we construct is $O(|W|^2)$. Each of these ply covers consists of two independent solutions that can be computed in $O((|W|+|Q|) \log^2 (|W|+|Q|))$ time using the strategy of Case 1.  Hence the overall running time is $O((|W|^3+|W|^2|Q|) \log^2 (|W|+|Q|))$.

 


\subsubsection*{Case 3 (Either two opposite corners or at least three corners  of $R$ intersect the squares in $W$)} 
Let $S$ be a minimum ply cover of $Q$ such that all the squares in $S$ are necessary (i.e., removing a square from $S$ will fail to cover $Q$). 


Let $c_1,c_2,c_3,c_4$ be the top-left, top-right, bottom-right and bottom-left corners of $R$, respectively. Let $W_i$, where $1\le i \le 4$, be the squares of $W$ that contain $c_i$. Similarly, let $S_i$ be the subset of squares in $S$ that contain $c_i$.


By Lemma~\ref{lem:disjoint}, one can remove at most four squares from $S$ such that the squares of $S_i$ do not intersect the squares of $S_{(i\bmod 4) +1}$. Consequently, we now have only the following possible  scenarios. 

    \smallskip\noindent\textsc{\bf Diagonal:} The squares of $S_i$ do not intersect the squares of $S_{(i\bmod 4) +1}$. The squares of  $S_1$ may intersect the squares of $S_3$, but the squares of $S_2$ do not intersect the squares of $S_4$ (or, vice versa).  Figures~\ref{fig:4-corner}(a) and (b) illustrate such scenarios.
    
    \smallskip\noindent\textsc{\bf Disjoint:} If two squares intersect, then they belong to the same set, e.g., Figure~\ref{fig:4-corner}(c).
    \smallskip 

We will compute a  minimum ply cover in both scenarios. However,  considering the squares we deleted, the size of the final ply cover we compute may be at most four more than the minimum ply cover.
\begin{figure}
    \centering
    \includegraphics[width=\textwidth]{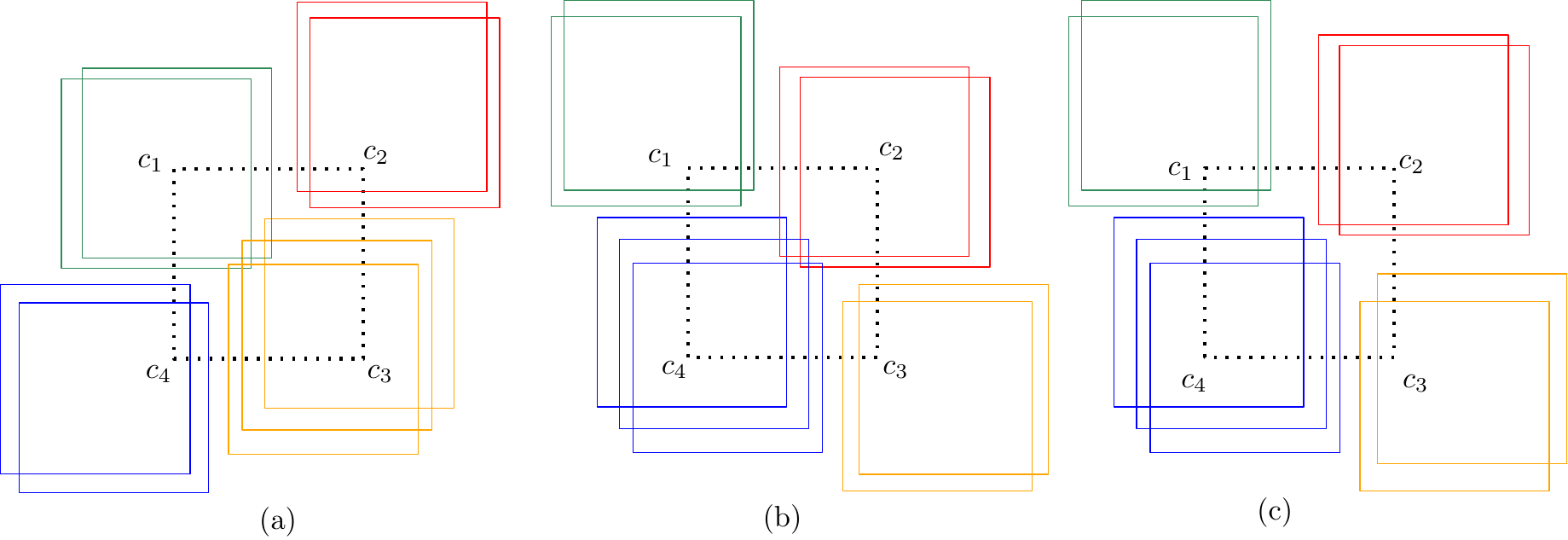}
    \caption{Illustration for the scenarios that may occur after applying Lemma~\ref{lem:disjoint}:     (a)--(b) \textsc{Diagonal}, and (c) \textsc{Disjoint}. }
    \label{fig:4-corner}
\end{figure}

\subsubsection*{Case 3.1 (Scenario \textsc{Diagonal})} We now consider the scenario \textsc{Diagonal}.  Our idea is to  perform a   search on the objective function to determine the minimum ply cover number. Let $k$ be a guess for the minimum ply cover number. Since in this case we know that the ply is determined by a corner of $R$, we ask whether there is a set of squares that cover all the points of $Q$ such that no corner of $R$ intersects more than $k$ squares and the ply of the set is also bounded by $k$. We will use a dynamic program to determine such a set of squares (if exists).

In general, by $T({r}, k_1,k_2,k_3,k_4)$  we denote the problem of finding a minimum ply cover for the points in  a rectangle ${r}$ such  that the ply is at most $\max\{k_1,k_2,k_3,k_4\}$,  and each corner $c_i$ respects its budget $k_i$, i.e., $c_i$ does not intersect more than $k_i$ squares. We will show that ${r}$ can always be expressed as a region bounded by at most four squares in $W$. $T$ returns a feasible ply cover if it exists. To express the original problem, we add four dummy squares in $W$ determined by the four sides of $R$ such that they lie  outside of $R$. Thus ${r} = R$ is the region bounded by the four dummy squares.

Assume without loss of generality that a square $A\in S_4$ intersects a square $B\in S_2$, as shown in Figure~\ref{fig:dp}(a). We assume $A$ and $B$ to be in the minimum ply cover and try out all such pairs. %
We first consider the case when $k\le 3$ and the minimum ply cover already contains $A$ and $B$. We enumerate all $O(|W|^4)$ possible options for $S_2$ and $S_4$ with $\ply(S_2\cup S_4)\le k$ and for each option, we  use Case 1 to determine  whether $\ply(S_1)$ and $\ply(S_3)$ are both upper bounded by $k$. We thus compute the solution to $T({r}, k_1,k_2,k_3,k_4)$  and store them in a table $D({r}, k_1,k_2,k_3,k_4)$, which takes $O ((|W|^5+|W|^4|Q|)\log^2(|W|+|Q|))$ time.

We now show how to decompose $T({r}, k_1,k_2,k_3,k_4)$ into two subproblems assuming that the minimum ply cover already contains $A$ and $B$. {We will use the table $D$ as a subroutine.}
\begin{figure}[h]
    \centering
    \includegraphics[width=\textwidth]{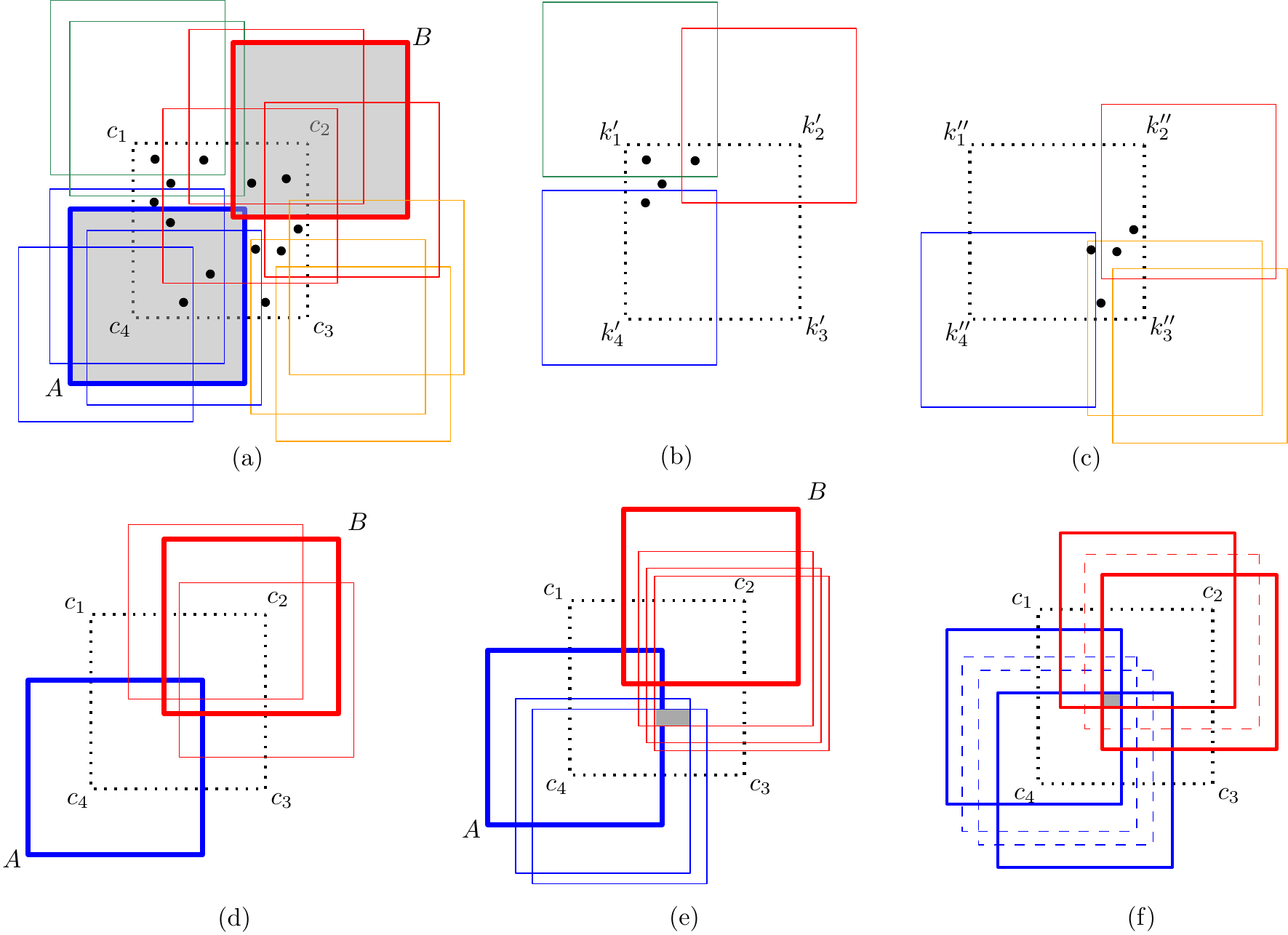}
    \caption{Illustration for the dynamic program. (a)--(c) Decomposition into subproblems. (d)--(f) Illustration for the  $(k+1)$ mutually intersecting squares. The dashed squares can be safely discarded. }
    \label{fig:dp}
\end{figure}

The first subproblem consists of the points that lie above $A$ and to the left of $B$, e.g., Figure~\ref{fig:dp}(b). We refer to these set of points by $Q_1$. The corresponding region ${r}'$ is bounded by four squares: $A$, $B$, and the two (dummy) squares from ${r}$. 
We now describe the squares that need to be considered  to cover these points. 

\begin{itemize}
    \item Note that for \textsc{Diagonal}, no square in $S_1$ intersects $A$ or $B$, hence we can only focus on  the squares of $W_1$ that do not intersect $A$ or $B$. 
    \item The squares of $S_2$ that do not intersect $Q_1$ are removed. The squares of $S_2$ that contains the bottom left corner of $B$ are removed because including  them will make  $B$ an  unnecessary square in the cover to be constructed. 
    \item Similarly, the squares of $S_4$ that do not intersect $Q_1$ or contains the top right corner of $A$ are removed. 
    \item No square in $S_3$ needs to be considered since to cover a point of $Q_1$ it must  intersect $A$ or $B$, which is not allowed in \textsc{Diagonal}.
\end{itemize}

The second subproblem  consists of the points that lie below  $B$ and to the right of $A$, e.g., Figure~\ref{fig:dp}(c). The corresponding region ${r}''$ is bounded by four squares: $A$, $B$, and the two squares from ${r}$. We denote these points by $Q_2$. The squares  to be considered can be described symmetrically. 

Let $W'$ and $W''$ be the set of squares considered to cover $Q_1$ and $Q_2$, respectively. By the construction of the two subproblems, we have $Q_1\cap Q_2 = \varnothing$ and $W'\cap W'' = \varnothing$.

For each corner $c_i$, let $k_i'$ and $k_i''$ be the budgets allocated for $c_i$ in the first and the second subproblems, respectively. Since we need to ensure that the ply of the problem $T$ is at most $k=\max\{k_1,k_2,k_3,k_4\}$ and each corner $c_i$ respects its budget $k_i$, we need to carefully distribute the budget among the subproblems when constructing the recurrence formula. Furthermore, let $S'_2$ and $S'_4$ be the sets of  squares corresponding to $c_2$ and $c_4$ that are selected as the solution to the first subproblem. Similarly, define $S''_2$ and $S''_4$ for the second subproblem. We now have the following  recurrence formula.


\begin{align*}
T({r},k_1,k_2,k_3,k_4) = \min\limits_{\substack{  \{A\in S_4,B\in S_2: A\cap B \not = \varnothing\} \\ k'_1 = k''_1 = k_1, k'_3 = k''_3 = k_3, \\ k'_2+k''_2 = k_2-1, \\k'_4+k''_4 = k_4-1}}  \begin{cases} 
\substack{{T({r}',k'_1,k'_2,k'_3,k'_4) \cup}\\
T({r}'',k''_1,k''_2,k''_3,k''_4) \cup \{A,B\}} &\text{, if $\delta \le k$  } \\\\
\substack{ {T({r}',k'_1,k'_2,k'_3,k'_4) \cup }\\  T({r}'',k''_1,k''_2,k''_3,k''_4) \cup\beta} &\text{, if $\delta > k$ and $k\ge 4 $} \\\\
\substack{D({r},k_1,k_2,k_3,k_4)} &\text{, if $\delta > k$ and  $k \le 3$}
 \end{cases}
\end{align*}
Here $\delta$ is the   ply of $(S'_2\cup S'_4 \cup S''_2 \cup S''_4\cup A \cup B)$ and  $\beta$ is  the set of squares that remain after discarding unnecessary squares  from $(S'_2\cup S'_4 \cup S''_2 \cup S''_4\cup A \cup B)$.  It may initially appear that we must have $k'_3=0$ in $T({r}',k'_1,k'_2,k'_3,k'_4)$ and  $k''_1=0$ in $T({r}'',k''_1,k''_2,k''_3,k''_4)$; however, since $S_1$ and $S_4$ are disjoint, we set $k'_3 = k_3$ and $k''_1 = k_1$ for simplicity.

If $\delta \le k$, then the union of $\{A,B\}$ and the squares obtained from the two subproblems must have a ply of at most $k$ for the following two reasons.  First, the squares of $S_1=S'_1\cup S''_1$ (similarly, $S_3$) cannot intersect the squares of $S_2\cup S_4 =S'_2\cup S''_2\cup S'_4\cup S''_4$. Second, by the budget distribution, the ply of $S_1$ can be at most $k_1\le k$  and the ply  of $S_3$ can be at most $k_3\le k$.

If $\delta >k$ and $k \le 3$, then each of $S_1$, $S_2$, $S_3$, $S_4$ can have at most three  rectangles. We can look it up using the table $D({r},k_1,k_2,k_3,k_4)$.

If $\delta >k \ge 4$, then we can have $k+1$ mutually intersecting squares and in  the following we show how to construct a solution with ply cover at most $k$ respecting the budgets, or to determine whether no such solution exists.  

If  $T({r}',k'_1,k'_2,k'_3,k'_4)$ and  $T({r}'',k''_1,k''_2,k''_3,k''_4)$ each returns a feasible solution, then we know that $(k+1)$ mutually intersecting squares can neither appear in $S'_2\cup S'_4$ nor in $S''_2\cup S''_4$. For example, if $k=4$, then the five thin squares shown in Figure~\ref{fig:dp}(e) cannot be a solution to  $T({r}'',k''_1,k''_2,k''_3,k''_4)$. Therefore, these $k+1$ mutually intersecting squares must include either both $A$ and $B$, or at least one of   $A$ and $B$. We now consider the following options.

\begin{itemize}
    
    \item \textbf{Option 1:} $S_4$ and $S_2$ each contains at least two squares that 
    belong to the set of $k+1$ mutually intersecting squares.
    Since the region created by the $k+1$ mutually intersecting squares is a rectangle, as illustrated in Figure~\ref{fig:dp}(f), we can keep only the  squares that determine the boundaries of this rectangle to obtain the same  point covering. 
    
    After discarding the unnecessary squares, we only have $\beta$ squares where $|\beta| = 4\le k$. Thus the ply of the union of $S_1\cup S_3$ and the remaining $\beta$ squares is at most $\beta \le k$. Hence we can obtain an affirmative solution by taking $  T({r}',k'_1,k'_2,k'_3,k'_4) \cup  $ $  T({r}'',k''_1,k''_2,k''_3,k''_4) \cup \beta  $.

     \item \textbf{Option 2:} $S_4$ only contains $A$ and $A$  intersects all $k$ squares of $S'_2\cup S''_2\cup B$. Since the $k+1$ mutually intersecting region is a rectangle, as illustrated in Figure~\ref{fig:dp}(d), we can keep only the  squares that determine the boundaries of this rectangle to obtain the same  point covering. After discarding the unnecessary squares, we only have  $\beta$ squares where $|\beta| = 3\le k$. Hence we can handle this case in the same way as in Option 1.
    
    
    \item \textbf{Option 3:} $S_2$ only contains $B$ and $B$  intersects all $k$ squares of $S'_4\cup S''_4\cup A$. This case is symmetric to Option 2. 
\end{itemize}

\smallskip
\noindent
In the base case, we either covered all the points, 
or we obtain a set of problems of type Case 1 or of Scenario \textsc{Disjoint} (Case 3.1.2).
 The potential base cases corresponding to Case 1 are formed by guessing $O(|W|^2)$ pairs of intersecting squares from opposite corners, as illustrated in Figure~\ref{base}(a).
  The potential $O(|W|^4)$ base cases corresponding to Scenario \textsc{Disjoint} are formed by two pairs of intersecting squares from opposite corners, as illustrated in Figure~\ref{base}(b).

\begin{figure}[h]
    \centering
    \includegraphics[width=.8\textwidth]{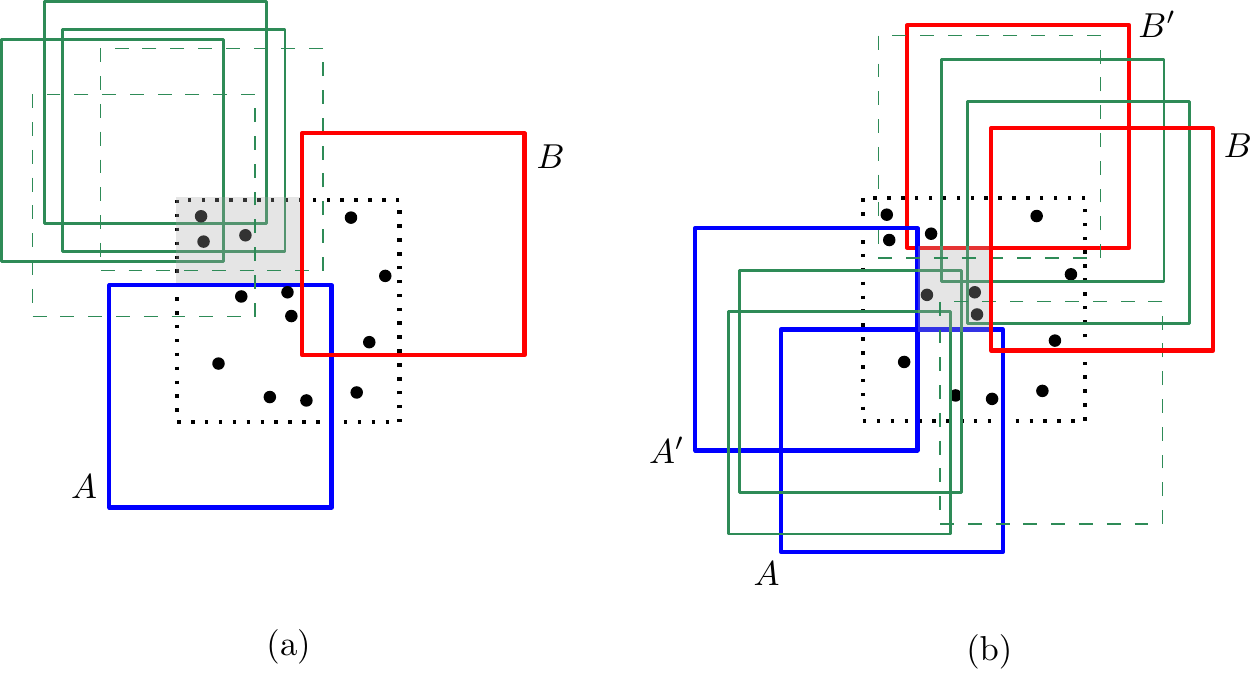}
    \caption{Illustration for the base cases, where the region corresponding to the base cases are shown in gray. (a) The base case corresponds to Case 1, where we ignore the squares that intersect the chosen squares $A$ and $B$. (b) An example of the base case that corresponds to scenario \textsc{Disjoint}, where we need to construct a solution such that no two squares from opposite corners intersect. We ignore all the squares that intersect the chosen squares $A$ and $B$, or $A'$ and $B'$, as well as those that makes any of them unnecessary. }
    \label{base}
\end{figure}

The precomputation of the base cases takes  $O(|W|^4f(|W|,|Q|))$ time, where  $f(|W|,|Q|)$ is the time to solve a problem of type Case 1 and of Scenario \textsc{Disjoint}. We will discuss the details of $f(|W|,|Q|)$  in the proof of Theorem~\ref{thm:unitSquare}.

Since $r$ is determined by at most four squares (e.g., Figure~\ref{base}), and since there are four budgets, the solution to the subproblems can be stored in a dynamic programming table of size $O(|W|^4k^4)$. Computing each entry requires examining $O(|W|^2)$ pairs of squares. Thus the overall running time becomes $O(|W|^6k^4+|W|^4f(|W|,|Q|))$.

\subsubsection*{Case 3.2 (Scenario \textsc{Disjoint})}  In this case, we can find a sequence of empty rectangles $\sigma=(e_1,e_2,\ldots)$ from top to bottom such that they do not intersect any square of $S$, as illustrated in Figure~\ref{fig:dp2}(a)--(b). 

The top side of the first rectangle $e_1$  is determined by the dummy square at the top side of $R$. The left side of $e_1$ is determined by the topmost  square $A_1$ of $S_1$, or if $S_1$ is empty, then by the dummy square at the left side of $R$. Similarly, the right side is determined by the topmost square $B_1$ of $S_2$ or a dummy square.   The bottom side of $e_1$ is determined by the    lowest horizontal line that touches both $A_1$ and  $B_1$. 

Each subsequent empty rectangle $e_i$ in $\sigma$ is determined as follows. The top side of $e_i$ is determined by the bottom side of $e_{i-1}$. The left side of $e_i$ is determined by a square from $A_i\in S_1\cup S_4$ or by the dummy square  at the left side of $R$. The right  side of $e_i$ is determined by a square from $B_i\in S_2\cup S_3$ or by the dummy square at the right side of $R$. The bottom side of $e_i$ is determined by the    lowest horizontal line that touches both  $A_i$ and $B_i$.

\begin{figure}[h]
    \centering
    \includegraphics[width=\textwidth]{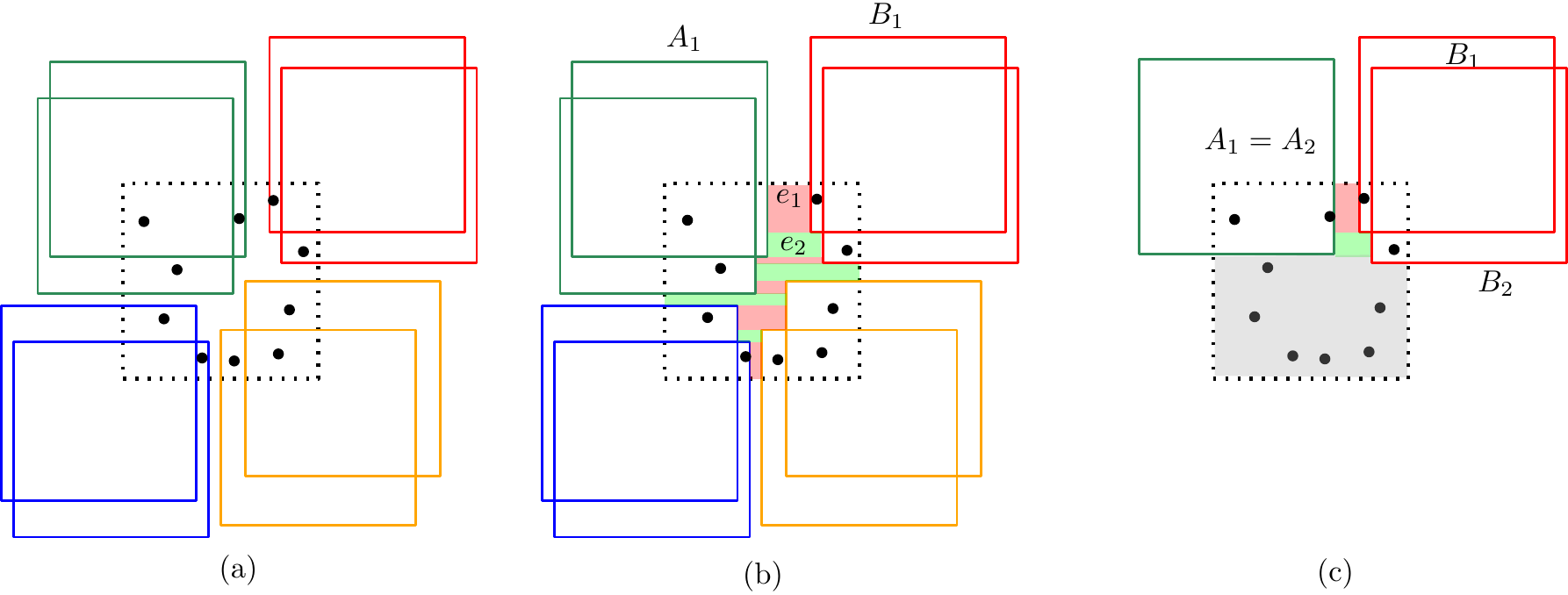}
    \caption{Illustration for the dynamic program.   }
    \label{fig:dp2}
\end{figure}

Similar to Case 3.1, we show how to solve the problem  $T({r},k_1,k_2,k_3,k_4)$, but here the region ${r}$ can be expressed with at most two squares, and is typically L-shaped. The $k_i$ corresponds to the budget given for the corner $c_i$ of $R$, where $1\le i\le 4$. Let $A,B$ be two squares,  where $A\in S_1\cup S_4$ and $B\in S_2\cup S_3$. Let $\ell$ be the lowest horizontal line that touches both $A$ and $B$, then ${r}$ is the region of $R$ that lies below $\ell$ and excludes $A$ and $B$, as shown by the shaded region in Figure~\ref{fig:dp2}(c).

We check all possibilities for $A_1\in S_1$ and $B_1\in S_2$. Then to decompose $T({r},k_1,k_2,k_3,k_4)$, we guess the next empty rectangle by choosing squares  $A_i\in S_1\cup S_4$ and $B_i\in S_2\cup S_3$. 
 We must guess at least one new square, i.e., we may have $A_i = A_{i-1}$ and $B_i = B_{i-1}$ but not both simultaneously.
If $A_i\in S_1$ and $A_{i-1}\not = A_i$, then the right side of $A_i$ must intersect the bottom side of $A_{i-1}$ (otherwise, $A_i$ would be unnecessary).   Similarly, if  $B_i\in S_2$ and    $B_{i-1}\not = B_i$, then the left side of $B_i$ must intersect the bottom side of $B_{i-1}$. If  $A_{i-1}\in S_1$ and $A_i\in S_4$, then $A_{i-1}$ and $A_i$ must be disjoint. Similarly, if $B_{i-1}\in S_2$ and $B_i\in S_3$, then $B_{i-1}$ and $B_i$ must be disjoint. We refer to such a pair of squares $A_i$ and $B_i$ to be a \emph{compatible pair} and define the recurrence relation as follows. 
 
\begin{align}
T({r},k_1,k_2,k_3,k_4) = \min\limits_{\substack{  \{
A_i,B_i \text{ are} \\ 
\text{ compatible}\}
}}  \begin{cases}
   T({r}',k'_1,k'_2,k'_3,k'_4),       & \text{if both } A_i, B_i \text{ are dummy}  \\
     & \text{or, } B_i \text{ is dummy and }A_i=A_{i-1},  \\
          & \text{or, } A_i \text{ is dummy and }B_i=B_{i-1},  \\ 
   T({r}',k'_1,k'_2,k'_3,k'_4) \cup\\  (\{A_i, B_i\} \setminus \{A_{i-1}, B_{i-1}\})   & \text{, otherwise}.  \\
  \end{cases} 
  \nonumber
\end{align}
Let $\ell$ be the lowest horizontal line that touches both $A_i$ and $B_i$, then ${r'}$ is the region of $R$ that lies below $\ell$ and excludes $A_i$ and $B_i$. The budgets $k'_1$ and $k'_4$ are set depending on whether $A_i$ belongs to $S_1$ or $S_4$, as well as whether $A_i$ coincides with $A_{i-1}$. The budgets $k'_2$ and $k'_3$ are set similarly. In the base case, either we find that the points are covered without exceeding the budget, or run out of budget with points remain uncovered, or we cannot find a compatible pair to form the next empty rectangle. 

If the dynamic program returns an affirmative solution, then it corresponds to one where each corner $c_i$ of $R$ hits at most $k_i$ squares. Since the sets $S_1,S_2,S_3,S_4$ are disjoint, it is straightforward to verify that the minimum ply cover number of $T(r,k_1,k_2,k_3,k_4)$ is at most $k = \max\{k_1,k_2,k_3,k_4\}$. 

Since $r$ is determined by two squares, and there are four budgets, the solution to the subproblems can be stored in a dynamic programming table of size $O(|W|^2k^4)$. Computing each entry requires examining $O(|W|^2)$ pairs of squares, and this takes $O(|W|^2)$  time assuming that we preprocessed  whether two squares are compatible. It is straightforward to do the preprocessing for various choices of $A_{i-1},A_i,B_{i-1}B_i$ in $O(|W|^4\log |Q| + |Q|\log |Q|)$ time, where the term $|Q|\log |Q|$ is for constructing the range tree to process the empty rectangle queries. Thus the overall running time becomes $O(|W|^4k^4 + |W|^4\log |Q| + |Q|\log |Q|)$.

\jyoti{
We now have the following theorem that summarizes the result of this section. 
}
\begin{theorem}\label{thm:unitSquare}
Given a set $Q$ of points inside a unit square $R$ and a set $W$ of axis-parallel unit squares, a ply cover of size $4+\ply^*(Q,W)$ 
can be computed in $O(( |W|^8(k^*)^4 + |W|^8\log |Q| + |W|^4|Q|\log |Q| )$ $\log k^*)$ 
 time, where $k^* = \ply^*(Q,W) \leq \min\{|Q|,|W|\}$.
\end{theorem}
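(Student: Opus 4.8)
The plan is to assemble Theorem~\ref{thm:unitSquare} by combining the case analysis of Section~\ref{sq} into a single algorithm with a unified running-time bound. First I would verify that the four cases are exhaustive: given the set $W$ of squares meeting the cell $R$, each square contains exactly one corner of $R$, so the set of corners that are ``hit'' by squares of $W$ is a subset of the four corners; up to symmetry this subset is either a single corner (Case~1), two consecutive corners (Case~2), or two opposite corners / three corners / all four corners (Case~3). Hence every instance falls under one of the cases, and in each case we have already shown how to produce either a minimum ply cover (Cases~1, 2) or a cover of size at most $4+\ply^*(Q,W)$ (Case~3, where Lemma~\ref{lem:disjoint} is invoked at most four times). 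Since $\ply^*(Q,W)\le\min\{|Q|,|W|\}$ is immediate (a minimum cardinality cover has size at most $|Q|$ and at most $|W|$, and its ply is at most its cardinality), the additive-$4$ guarantee is justified in all cases.

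Next I would handle the search over the guess $k$ for $\ply^*(Q,W)$ used in Case~3. Since $k^*\le\min\{|Q|,|W|\}$, we can binary-search for the smallest $k$ for which the decision procedure $T(R,k,k,k,k)$ returns a feasible cover; this multiplies the Case~3 running time by a $\log k^*$ factor. The dominant term in Case~3.1 is $O(|W|^6 k^4 + |W|^4 f(|W|,|Q|))$ and in Case~3.2 is $O(|W|^4 k^4 + |W|^4\log|Q| + |Q|\log|Q|)$, where $f(|W|,|Q|)$ is the cost of solving a base-case instance of type Case~1 or of Scenario \textsc{Disjoint}. I would note that a Case~1 base case costs $O((|W|+|Q|)\log^2(|W|+|Q|))$, while a Scenario \textsc{Disjoint} base case is itself solved by the Case~3.2 dynamic program, costing $O(|W|^4 k^4 + |W|^4\log|Q|+|Q|\log|Q|)$; substituting $f(|W|,|Q|)=O(|W|^4 k^4 + |W|^4\log|Q|+|Q|\log|Q|)$ into the Case~3.1 bound gives $O(|W|^6 k^4 + |W|^8 k^4 + |W|^8\log|Q| + |W|^4|Q|\log|Q|) = O(|W|^8 k^4 + |W|^8\log|Q| + |W|^4|Q|\log|Q|)$ per value of $k$. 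Multiplying by the $O(\log k^*)$ binary-search overhead yields exactly the claimed bound $O((|W|^8(k^*)^4 + |W|^8\log|Q| + |W|^4|Q|\log|Q|)\log k^*)$, which also dominates the $O((|W|^3+|W|^2|Q|)\log^2(|W|+|Q|))$ cost of Cases~1 and~2.

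Then I would address correctness of the overall output: run the appropriate case's procedure, and return the resulting cover. For Cases~1 and~2 it is exactly optimal; for Case~3, the dynamic program with budgets $(k,k,k,k)$ returns, whenever $k\ge k^*$, a cover of the (possibly reduced) point set whose ply at every corner is at most $k$ and whose overall ply is at most $k$ by the arguments already given (the $\delta>k$ options collapse $k+1$ mutually intersecting squares to the $\le 4$ boundary-defining squares), and adding back the at most four squares removed via Lemma~\ref{lem:disjoint} to recover a cover of all of $Q$ inflates the ply by at most $4$. So the binary search finds some $k\le k^*$ that succeeds, and the returned cover has ply at most $k+4\le k^*+4$.

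The main obstacle I expect is the careful bookkeeping of the running-time recursion in Case~3.1: because a Case~3.1 subproblem can bottom out in a Scenario \textsc{Disjoint} instance, and a Scenario \textsc{Disjoint} instance is solved by its own dynamic program rather than by a Case~1-style greedy step, one must make sure the recursion does not nest further (it does not — a \textsc{Disjoint} instance produces only Case~1-type base cases) and that the table sizes and per-entry costs are aggregated correctly to land on $|W|^8$ rather than a larger exponent. A secondary subtlety is confirming that the budget redistribution in the recurrences ($k'_2+k''_2=k_2-1$, etc.) never forces infeasibility when $k\ge k^*$, which follows because an optimal diagonal/disjoint-structured cover splits cleanly across the decomposing squares $A,B$ with the one ``spent'' unit of budget accounted for by $A$ and $B$ themselves.
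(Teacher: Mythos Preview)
Your proposal is correct and follows essentially the same assembly as the paper's proof: run the case analysis, substitute the \textsc{Disjoint} cost into $f(|W|,|Q|)$ for Case~3.1 to obtain the $|W|^8$ term, and search over $k$ to incur the $\log k^*$ factor. One small slip: a plain binary search over $k\in\{1,\ldots,\min\{|Q|,|W|\}\}$ yields a $\log\min\{|Q|,|W|\}$ overhead, not $\log k^*$; the paper obtains $O(\log k^*)$ by first doubling an initial guess until the decision procedure succeeds and only then binary-searching within the resulting interval of length $O(k^*)$.
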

\begin{proof}
We first compute the ply cover for each $W_i$ (Case 1), where $W_i$ is the set of squares containing the corner $c_i$ of $R$, where $1\le i\le 4$. This takes $O((|W|+|Q|)\log^2(|W|+|Q|))$ time.

We then compute the ply cover for each pair of consecutive corners (Case 2). This takes $O((|W|^3+|W|^2|Q|)\log^2(|W|+|Q|))$ time.

We next compute a ply cover of size at most $4+\ply^*(Q,W)$ considering the general case  (Case 3). 
We guess at most 4 squares such that deleting them will guarantee some separation between  the squares in the minimum ply cover solution. There are two scenarios \textsc{Diagonal} and \textsc{Disjoint}. Solving \textsc{Disjoint} takes $O(|W|^4k^4 + |W|^4\log |Q| + |Q|\log |Q|)$ time. Solving \textsc{Diagonal} takes  $O(|W|^6k^4 + |W|^4f(|W|,|Q|) )$ time, where  $f(|W|,|Q|)$ is the time to solve a problem of type Case 1 and of Scenario \textsc{Disjoint}. Therefore, the running time for \textsc{Diagonal} is 
\begin{align*}
     & O(|W|^6k^4 + |W|^4 \cdot (  |W|^4k^4 + |W|^4\log |Q| + |Q|\log |Q|  + (|W|+|Q|)\log^2(|W|+|Q|)   )  )\\
     &\subseteq O(|W|^6k^4 +    |W|^8k^4 + |W|^8\log |Q| + |W|^4|Q|\log |Q|       )\\
         &= O(  |W|^8k^4 + |W|^8\log |Q| + |W|^4|Q|\log |Q|       ).
\end{align*}


Let $r^*$ be the minimum ply cover number of the instance obtained after removing at most 4 squares. The algorithm does not initially know $r'^*$. 
To minimize running time, start with an initial value $r \in O(1)$. If the algorithm fails to find a solution because $r$ is too small, double the value of $r$ and repeat. A sufficiently large value $r$ will be discovered within $O(\log r^*)$ iterations, leaving an interval of possible values for $r^* \in \{r/2, r/2+1, \ldots, r\}$. The value $r^*$ can be found by applying binary search on the interval and running our algorithm on the corresponding value of $r$ at each iteration. These two steps increase our running time by a factor of $O(\log r^*)$. Since a minimum ply cover can include at most all squares in $W$, and, in the worst case, each point in $Q$ is covered by a unique square in $W$, we get that $r^* \leq \min\{|Q|, |W|\}$.

We take the minimum of all ply cover to compute the final ply cover for $P$. If the squares in the minimum ply cover is intersected by one corner of $R$ (or two consecutive corners of $R$), then the ply cover we compute in Case 1 (or Case 2) will be able to identify a minimum ply cover. Otherwise, the size of the ply cover we compute in Case 3 will be at most four more than the minimum ply cover number. 
\end{proof}

\subsection{Covering a General Point Set}
\label{gen}





Given a set $P$ of points and a set $U$ of axis-parallel unit squares, both in $\mathbb{R}^2$, we now give a polynomial-time algorithm that returns a ply cover of $P$ with $U$ whose cardinality is at most \jyoti{$(8+\varepsilon)$} times the minimum ply cover number of $P$ with $U$. Recall that our algorithm partitions $P$ along a unit grid and applies Theorem~\ref{thm:unitSquare} iteratively at each grid cell to select a subset of $U$ that is a minimum ply cover for the grid cell. Elements of $U$ selected to cover points of $P$ in a given grid cell overlap neighbouring grid cells, which can cause the ply to increase in those neighbouring cells; Lemma~\ref{lem:approx8} allows us to prove Theorem~\ref{thm:approxAlg} and Corollary~\ref{cor:approxAlg}, showing that the resulting ply is at most \jyoti{$(8+\varepsilon)$} times the optimal value.

Partition $P$ using a unit grid. I.e., each cell in the partition contains $P \cap [i, i+i) \times [j, j+1)$, for some $i,j \in \mathbb{Z}$. Each grid cell has eight grid cells adjacent to it. Let $C_1, \ldots, C_4$ denote the four grid cells that are its diagonal neighbours. See Figure~\ref{fig:grid1}.

\begin{figure}[h]
    \centering
    \includegraphics[width=0.2\textwidth]{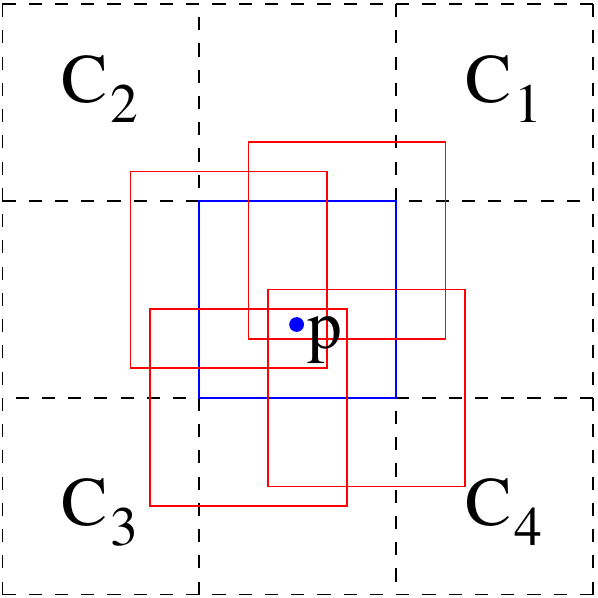}
    \caption{The grid cells $C_1, \ldots, C_4$ are the diagonal grid neighbours of the blue grid cell. If any point $p$ in the blue grid cell is covered by squares (red) that intersect the grid cells $C_1$--$C_4$, then these same squares cover the entire blue grid cell (Lemma~\ref{lem:approx8}).}
    \label{fig:grid1}
\end{figure}
 
\begin{lemma}\label{lem:approx8}
If any point $p$ in a grid cell $C$ is contained in four squares, $\{S_1, \ldots, S_4\} \subseteq U$, such that for each $i \in \{1, \ldots, 4\}$, $S_i$ intersects the cell $C_i$ that is $C$'s diagonal grid neighbour, then $C \subseteq S_1 \cup S_2 \cup S_3 \cup S_4$. 
\end{lemma}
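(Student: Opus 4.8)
The plan is to argue purely about the geometry of unit squares and the position of the cell $C$. Set up coordinates so that $C = [0,1]\times[0,1]$, and let its diagonal neighbours be $C_1 = [-1,0]\times[1,2]$ (top-left), $C_2 = [1,2]\times[1,2]$ (top-right), $C_3 = [1,2]\times[-1,0]$ (bottom-right), $C_4 = [-1,0]\times[-1,0]$ (bottom-left). Let $p\in C$, and suppose $p\in S_i$ for each $i$, where $S_i$ is an axis-parallel unit square meeting $C_i$. The goal is to show $C\subseteq S_1\cup S_2\cup S_3\cup S_4$.

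The key observation I would isolate first: a unit square $S$ that contains a point $p\in[0,1]^2$ and also intersects the top-left diagonal cell $C_1=[-1,0]\times[1,2]$ must actually contain the whole top-left corner region of $C$ up to $p$. Concretely, if $S = [a,a+1]\times[b,b+1]$ contains $p=(p_x,p_y)$ and meets $C_1$, then from meeting $C_1$ we get $a \le 0$ and $b+1 \ge 1$, i.e. $a\le 0$ and $b\ge 0$; combined with $a\le p_x \le a+1$ and $b \le p_y \le b+1$ we obtain $a+1 \ge p_x$ and $b \ge p_y - 1$, and also $a \le 0 \le$ everything in $[0,p_x]$ while $b+1 \ge 1 \ge$ everything in $[p_y,1]$. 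Hence $[0,p_x]\times[p_y,1]\subseteq S_1$. The same argument applied to the other three corners gives $[0,p_x]\times[0,p_y]\subseteq S_4$, $[p_x,1]\times[p_y,1]\subseteq S_2$, and $[p_x,1]\times[0,p_y]\subseteq S_3$. These four axis-parallel subrectangles share the point $p$ and their union is exactly $[0,1]\times[0,1]=C$, which finishes the proof.

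So the concrete steps are: (1) fix the coordinate system and write the constraint ``$S_i$ meets $C_i$'' as two linear inequalities on the lower-left corner of $S_i$; (2) combine those with ``$S_i$ contains $p$'' to show $S_i$ contains the closed axis-aligned rectangle spanned by $p$ and the corresponding corner of $C$; (3) observe these four rectangles tile $C$. I would present step (2) once in detail for the top-left case and then say ``by symmetry'' for the remaining three, taking care to note that the general-position assumption on the grid versus $U$ is not needed here since we only use closed containment.

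The main obstacle — really the only subtle point — is step (2): making sure the inequalities are combined in the right direction so that the rectangle one extracts is genuinely the corner-to-$p$ rectangle and not something smaller, and being careful that "$S_i$ intersects $C_i$" is used with the correct cell (the *diagonal* neighbour, not an edge neighbour), since an edge-neighbour version of the statement would be false. Everything else is a one-line tiling argument.
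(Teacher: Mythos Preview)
Your proposal is correct and follows essentially the same approach as the paper: each $S_i$ contains the axis-aligned rectangle spanned by $p$ and the corresponding corner of $C$, and these four rectangles tile $C$. Your version is simply more explicit, writing out the coordinate inequalities that the paper summarizes in a sentence.
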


\begin{proof}
Choose any grid cell $C$ and any point $p \in C$. Let $C_1, \ldots, C_4$ denote the four grid cells that are $C$'s diagonal grid neighbours. Suppose $\{S_1, \ldots, S_4\} \subseteq U$ is a set of axis-aligned squares such that for each $i\in \{1,\ldots, 4\}$, $C_i \cap S_i \neq \varnothing \wedge p \in S_i$; i.e., each diagonal grid cell intersects some square that covers $p$. Observe that $S_1$ covers the region of $C$ above and to the right of $p$. Similarly, $S_2$ covers the region of $C$ above and to the left of $p$. By analogous observations for $S_3$ and $S_4$ with respect to $C_3$ and $C_4$, it follows that $S_1 \cup S_2 \cup S_3 \cup S_4$ covers $C$.
\end{proof}


\begin{theorem}\label{thm:approxAlg}
Given a set $P$ of points and a set $U$ of axis-parallel unit squares, both in $\mathbb{R}^2$, a ply cover of $P$ using $U$ can be computed in $O(( |U|^8 (k^*)^4 + |U|^8\log |P| + |U|^4|P|\log|P|) \log k^*)$ 
time whose ply is at most $8k^*+32$, where $k^* = \ply^*(P,U) \leq \min\{|P|,|U|\}$ denotes the minimum ply cover number of $P$ by $U$. 
\end{theorem}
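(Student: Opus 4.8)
The plan is to combine the per-cell guarantee of Theorem~\ref{thm:unitSquare} with the geometric observation in Lemma~\ref{lem:approx8} to bound the global ply. First I would set up the algorithm precisely: overlay a unit grid $\mathcal{G}$ on $P$ in general position relative to $U$, and for each non-empty cell $R$ let $Q_R=P\cap R$ and $W_R=\{S\in U: S\cap R\neq\varnothing\}$; apply Theorem~\ref{thm:unitSquare} to each $(Q_R,W_R)$ to obtain a ply cover $S_R$ of $Q_R$ with $\ply(S_R)\le \ply^*(Q_R,W_R)+4$. The output is $S=\bigcup_R S_R$ over all non-empty cells. Covering is immediate since every point of $P$ lies in some non-empty cell and is covered by that cell's solution. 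The running-time bound follows by summing the bound of Theorem~\ref{thm:unitSquare} over all $O(|P|)$ non-empty cells, using $|W_R|\le|U|$, $|Q_R|\le|P|$, and $\ply^*(Q_R,W_R)\le k^*$ (since any global ply cover restricted to $W_R$ still covers $Q_R$), and absorbing the sum into the stated expression; I would note $k^*\le\min\{|P|,|U|\}$ as argued in the proof of Theorem~\ref{thm:unitSquare}.

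The main work is the ply bound $8k^*+32$. The key point is that a unit square can intersect at most a bounded number of unit grid cells, and more precisely: fix any point $q$ in the plane and consider the squares of $S$ that contain $q$. Each such square $S'$ belongs to $S_R$ for some cell $R$ with $S'\cap R\neq\varnothing$; since $S'$ is a unit square containing $q$, the cell $R$ must be one of the (at most) nine cells in the $3\times 3$ block of cells surrounding the cell containing $q$ — actually I would argue that $R$ lies within the $2\times 2$ block of cells whose closure contains $q$ together with their neighbours, but the clean bound is: the cells $R$ that can contribute a square through $q$ are limited to a constant-size set $\mathcal{B}(q)$. For each such cell $R$, the number of squares of $S_R$ through $q$ is at most $\ply(S_R)\le k^*+4$, so a crude bound gives $9(k^*+4)$; the point of the theorem is to shave this to $8(k^*+4)=8k^*+32$.

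The improvement comes from Lemma~\ref{lem:approx8}. Place $q$ in a cell $C$; the cells whose solutions can contain $q$ are $C$ itself and its eight neighbours, but among the eight neighbours the four diagonal ones $C_1,\dots,C_4$ are special: if $S_{C_i}$ contributes a square through $q$, that square intersects $C$ (being a unit square meeting $C_i$ and containing a point $q\in C$), and it intersects $C_i$ by construction. Here I would need the slightly stronger statement that if a square of $S_{C_i}$ contains $q\in C$ and intersects $C_i$, then it covers the relevant quadrant of $C$ as in Lemma~\ref{lem:approx8}'s proof — so if all four diagonal cells simultaneously contribute a square through $q$, these four squares already cover all of $C$, hence $S_C$ need never have been nonempty on points of $C$ in the region near $q$... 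The cleaner route: argue that we may assume (or post-process so) that for the worst point $q$, at most three of the four diagonal neighbours contribute, because if all four contribute then by Lemma~\ref{lem:approx8} the four diagonal squares cover $C$ entirely and we could have used an empty solution for $C$; formally, I would show that whenever all four diagonal neighbours' solutions hit a common point $q\in C$, removing $S_C$ entirely keeps $P$ covered, contradicting the algorithm's choice of a minimal solution (or we simply discard $S_C$ in that event). Either way, the cells contributing squares through $q$ number at most $1+4+3=8$ (the cell $C$, its four edge-neighbours, and at most three of its four diagonal neighbours), each contributing at most $k^*+4$, giving $\ply(S)\le 8(k^*+4)=8k^*+32$.

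The step I expect to be the main obstacle is making the "at most eight contributing cells" argument fully rigorous: I must handle the boundary geometry carefully (a unit square meeting $C_i$ and containing a point of the open cell $C$ genuinely does cover the corresponding quadrant of $C$, using general position of the grid to avoid degenerate boundary coincidences), and I must argue the discard/minimality step cleanly so that removing $S_C$ when all four diagonals fire does not create a cascade affecting other cells — this is fine because the diagonal squares covering $C$ are already in $S$ regardless, so covering is preserved globally, and removing squares only decreases ply elsewhere. Once that is in place, Theorem~\ref{thm:approxAlg} follows, and Corollary~\ref{cor:approxAlg} (the $(8+\varepsilon)$ bound) is obtained by the standard trick of handling the regime $k^*=O(1/\varepsilon)$ with the exact/parameterized algorithm of Biedl et al.\ and using $8k^*+32\le(8+\varepsilon)k^*$ once $k^*\ge 32/\varepsilon$.
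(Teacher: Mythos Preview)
Your approach matches the paper's: compute per-cell covers via Theorem~\ref{thm:unitSquare}, take the union, and use Lemma~\ref{lem:approx8} to show that at most eight of the nine cells in the $3\times3$ neighbourhood contribute squares through any point (the paper's iterative marking/unmarking loop is exactly your post-processing discard of $S_C$, and its ``seven neighbours plus $C$'' count is your $1{+}4{+}3$, respectively $0{+}4{+}4$ after a discard). The one place to tighten is the running-time sum: summing the per-cell bound of Theorem~\ref{thm:unitSquare} over $O(|P|)$ non-empty cells naively introduces an extra $|P|$ factor you cannot simply ``absorb''; the paper instead uses that each square of $U$ meets at most four grid cells, so $\sum_R |W_R|\le 4|U|$ and hence $\sum_R |W_R|^8\le |U|^{7}\sum_R|W_R|=O(|U|^8)$, which is what gives the stated bound.
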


\begin{proof}
Partition $P$ along a unit grid and apply Theorem~\ref{thm:unitSquare} iteratively to find a near minimum ply cover for each grid cell. 
Let $S \subseteq U$ denote the union of the sets of squares selected to cover the points of $P$ in each grid cell.
Mark each grid cell that contains a point of $P$.
Select any marked grid cell $C$ whose eight grid neighbours are marked. 
Observe that an axis-parallel square $S_i$ that intersects $C$ can intersect one or more of the eight grid cells that are $C$'s neighbours in the grid, but $S_i$ cannot intersect any other grid cell outside these nine grid cells. 

{\bf Case 1.} Suppose no point in $C$ is covered by squares that cover points in each of the four grid cells that are diagonal grid neighbours to $C$. For every point $p$ in $C$, there is some diagonal grid neighbour $C'$ of $C$ such that $p$ is not covered by any square that covers points in $C'$. Consequently, only seven grid neighbours of $C$ can contribute to the ply of $p$. 

{\bf Case 2.} Suppose some point $p$ in $C$ is covered by squares that cover points in each of the four grid cells that are diagonal grid neighbours to $C$. By Lemma~\ref{lem:approx8}, these squares cover $C$. Consequently, if such a point $p$ in $C$ exists, then the squares of $U$ selected to cover $C$ that were not selected to cover any of its neighbouring grid cells are not required; remove these from the solution set $S$. Remove the mark from the grid cell $C$. Select any remaining marked grid cell whose eight grid neighbours are marked, and repeat until no such grid cell remains. Upon termination, the ply in any grid cell $C$ can be at most eight times the ply in its neighbouring cells, which is at most $8 \cdot(k^* + 4) = 8k^* + 32$. 

Since each square in $U$ intersects at most four grid cells, the total time for finding a minimum ply cover for grid cells sums to $O(( |U|^8 (k^*)^4 + |U|^8\log |P| + |U|^4|P|\log|P|) \log k^*)$, where $k^* = \ply^*(P,U)$. The next phase, iteratively marking cells and removing squares from the solution set $S$, takes time $O(|P| + |U|)$.
\end{proof}

 A set $U$ of axis-parallel squares in $\mathbb{R}^2$ determines $O(|U|^2)$ connected regions or faces.  If such a face is non-empty (i.e., contains some points from $P$), then we can pick an arbitrary representative point of $P$ in that face and discard the rest. It is straightforward to perform this computation in $O(|U|^2|P|)$ time. Observe that the ply cover of the resulting instance would be the  same  as the ply cover of the input instance. Hence we can simplify the time complexity of Theorem~\ref{thm:approxAlg} by assuming $|P| \in O(|U|^2)$.

\begin{corollary}\label{cor5}
Given a set $P$ of points and a set $U$ of axis-parallel unit squares, both in $\mathbb{R}^2$, a ply cover of $P$ using $U$ can be computed in 
$O(|P|+|U|^8 (k^*)^4\log k^* + |U|^8\log |U| \log k^*)$
time whose ply is at most 
$8k^*+32$, where $k^* = \ply^*(P,U) \leq \min\{|P|,|U|\}$ denotes the minimum ply cover number of $P$ by $U$. 
\end{corollary}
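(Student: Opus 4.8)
The plan is to derive Corollary~\ref{cor5} from Theorem~\ref{thm:approxAlg} by a straightforward input-reduction argument, so the only real content is bookkeeping of the running time after the reduction. First I would argue the reduction is sound: the arrangement of the $|U|$ axis-parallel unit squares partitions the plane into $O(|U|^2)$ faces, and any two points of $P$ lying in the same face are covered by exactly the same subset of $U$. Hence replacing all points of $P$ in a common face by a single representative does not change which subsets of $U$ cover $P$, and therefore does not change $\plycover(P,U)$ or $\ply^*(P,U)$; this is exactly the observation stated in the paragraph preceding the corollary. The cost of this preprocessing step is $O(|U|^2|P|)$ (locate each point in the arrangement, keep one point per non-empty face), and it is dominated by the $O(|P|)$ term once we fold in that we only do it once — more precisely it contributes an additive $O(|U|^2 |P|)$, which I will absorb; alternatively, since the claimed bound has a standalone $+|P|$ term, I can note that point location / bucketing by face can be done in $O(|P|\log|U| + |U|^2)$ time with a suitable sweep, and $|U|^2 \le |U|^8$, so this is within budget.

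Next I would substitute $|P| \in O(|U|^2)$ into the running time of Theorem~\ref{thm:approxAlg}, namely $O\big((|U|^8(k^*)^4 + |U|^8\log|P| + |U|^4|P|\log|P|)\log k^*\big)$. The term $|U|^8\log|P|$ becomes $O(|U|^8 \log|U|)$ since $\log|P| = O(\log|U|)$, and the term $|U|^4|P|\log|P|$ becomes $O(|U|^6\log|U|)$, which is dominated by $|U|^8\log|U|$. This yields $O\big((|U|^8(k^*)^4 + |U|^8\log|U|)\log k^*\big)$ for the algorithm run on the reduced instance, and adding the preprocessing cost gives the stated bound $O\big(|P| + |U|^8(k^*)^4\log k^* + |U|^8\log|U|\log k^*\big)$. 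The correctness of the ply guarantee ($8k^*+32$) and the bound $k^* \le \min\{|P|,|U|\}$ are inherited verbatim from Theorem~\ref{thm:approxAlg}, observing that the reduction leaves $\ply^*$, $|U|$, and the fact that $k^*$ is at most the number of (distinct) points unchanged — and in fact $k^*$ on the reduced instance equals $k^*$ on the original instance.

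I do not anticipate a genuine obstacle here; the one point requiring slight care is making sure the preprocessing does not secretly reintroduce a $|U|^2|P|$ term that exceeds the advertised $O(|P|)$ summand. The clean way to handle this is to note that we do not need the full arrangement: it suffices to compute, for each point $p \in P$, the vector of which squares contain it — or rather, a hashable signature distinguishing faces — which can be obtained by a plane sweep over the $O(|U|)$ horizontal edges combined with sorted $x$-coordinates, spending $O((|U|+|P|)\log(|U|+|P|))$ time total, again comfortably inside the budget. With the reduction justified and the arithmetic done, the corollary follows. I would write the proof as: ``By the observation preceding the statement, we may assume $|P| \in O(|U|^2)$ after $O(|P|\log|U| + |U|^2)$-time preprocessing that preserves $\plycover(P,U)$ and $\ply^*(P,U)$. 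Substituting this bound on $|P|$ into the running time of Theorem~\ref{thm:approxAlg} and simplifying (using $\log|P| = O(\log|U|)$ and $|U|^6 = O(|U|^8)$) gives the claimed time bound; the ply guarantee $8k^*+32$ and the inequality $k^* \le \min\{|P|,|U|\}$ are inherited directly from Theorem~\ref{thm:approxAlg}.''
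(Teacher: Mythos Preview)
Your proposal is correct and follows essentially the same approach as the paper: the paper's justification for Corollary~\ref{cor5} is precisely the paragraph preceding it, which reduces to $|P|\in O(|U|^2)$ via the arrangement-of-faces argument and then substitutes into the running time of Theorem~\ref{thm:approxAlg}. If anything, you are more careful than the paper about the preprocessing cost---the paper simply states an $O(|U|^2|P|)$ bound for the reduction without checking that it fits inside the advertised $O(|P|+\cdots)$ total, whereas you correctly flag this and propose a sweep-based alternative that does.
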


Biedl et al.~\cite{DBLP:journals/comgeo/BiedlBL21} gave a 2-approximation that runs in polynomial time when $k^* \in O(1)$. We can use their algorithm when $k^*$ is small, and use our algorithm for larger values of $k^*$.

\begin{corollary}\label{cor:approxAlg}
Given a set $P$ of points and a set $U$ of axis-parallel unit squares, both in $\mathbb{R}^2$, a ply cover of $P$ using $U$ can be computed in 
polynomial
time whose ply is at most $(8+\varepsilon)$ times the minimum ply cover number $k^* = \ply^*(P,U)$, for every fixed  $\varepsilon > 0$. 
\end{corollary}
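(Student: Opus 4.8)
The plan is to combine Corollary~\ref{cor5} with the known 2-approximation of Biedl et al.~\cite{DBLP:journals/comgeo/BiedlBL21} in a threshold fashion, choosing the threshold as a function of the fixed $\varepsilon>0$. First I would observe that Corollary~\ref{cor5} already gives, in polynomial time, a ply cover of ply at most $8k^*+32$. When $k^*$ is large relative to $1/\varepsilon$, the additive $32$ is absorbed into the multiplicative slack: precisely, if $k^* \ge 32/\varepsilon$, then $8k^*+32 \le 8k^* + \varepsilon k^* = (8+\varepsilon)k^*$, so the corollary's algorithm alone suffices. The remaining regime is $k^* < 32/\varepsilon$, i.e. $k^* \in O(1/\varepsilon)$, which for fixed $\varepsilon$ means $k^* \in O(1)$.

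Next I would handle the small-$k^*$ regime using the Biedl et al.\ 2-approximation, whose running time is $O((k^*+|P|)(2|U|)^{3k^*+1})$; when $k^* = O(1)$ this is polynomial in $|P|$ and $|U|$. The only subtlety is that neither algorithm is told $k^*$ in advance. I would resolve this by running the threshold test structurally rather than by knowing $k^*$: run the Biedl et al.\ algorithm with its objective-value search capped at $k = \lceil 32/\varepsilon\rceil$; if it certifies a ply cover of ply at most $2\lceil 32/\varepsilon\rceil$, that witnesses $k^* \le \lceil 32/\varepsilon\rceil$ and we are in the small regime with a $2$-approximation in hand. If the capped search fails, then $k^* > \lceil 32/\varepsilon\rceil \ge 32/\varepsilon$, so we fall through to the algorithm of Corollary~\ref{cor5}, which then delivers a $(8+\varepsilon)$-approximation as argued above. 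Finally I would take the better of the two covers produced (when both are available) to be safe; the claimed ratio holds in every case.

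For the running time, I would note that the Biedl et al.\ phase runs with the fixed exponent $3\lceil 32/\varepsilon\rceil+1$, which is a constant depending only on $\varepsilon$, so that phase is polynomial in $|P|+|U|$; and the Corollary~\ref{cor5} phase is polynomial by its stated bound together with $k^* \le \min\{|P|,|U|\}$. Hence the overall algorithm is polynomial for every fixed $\varepsilon>0$, establishing the corollary.

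I expect the main obstacle to be purely bookkeeping rather than conceptual: making the case split executable without prior knowledge of $k^*$, and checking that the constants line up so that the additive $+32$ of Corollary~\ref{cor5} is genuinely dominated by $\varepsilon k^*$ exactly on the complement of the range handled by the constant-ply algorithm. Both of these are routine once the threshold $k^* = 32/\varepsilon$ is identified, so the proof is short.
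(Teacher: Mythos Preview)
Your proposal is correct and follows essentially the same approach as the paper: split on the threshold $\varepsilon k^* \gtrless 32$, invoking Corollary~\ref{cor5} in the large regime and the Biedl et al.\ 2-approximation in the small (constant-$k^*$) regime. You are in fact slightly more careful than the paper in explaining how to execute the case split without prior knowledge of $k^*$ (by capping the Biedl et al.\ search at $\lceil 32/\varepsilon\rceil$), a detail the paper leaves implicit.
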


\begin{proof}
By Corollary~\ref{cor5}, we can find a ply cover of $P$ by $U$ 
in $O(|P|+|U|^8 (k^*)^4\log k^* + |U|^8\log |U| \log k^*)$
time whose ply is at most $8k^*+32$.

{\bf Case 1.} \jyoti{Suppose $\varepsilon k^* \ge 32$. Then 
$8k^* + 32  \leq (8+\varepsilon) k^*$. 
}

{\bf Case 2.} \jyoti{Suppose $\epsilon k^* <  32$. 
We apply the 2-approximation algorithm of Biedl et al.~\cite{DBLP:journals/comgeo/BiedlBL21} in $O(|P| \cdot |U|)^{3k^*+1})$ time, which is polynomial since $k^* \in O(1)$.}
\end{proof}
 
\section{Conclusion}

In this paper we settled the open question \cite{DBLP:journals/comgeo/BiedlBL21,DBLP:conf/cccg/BiniazL20} of whether there exists a polynomial-time $O(1)$-approximation algorithm for the minimum ply cover problem with axis-parallel unit squares. We   gave a \jyoti{$(8+\varepsilon)$}-approximation polynomial-time algorithm for the problem.     Through careful case analysis, it may be possible to further improve the running time of our approximation algorithm presented in Theorem~\ref{thm:approxAlg}. A natural direction for future research would be to reduce the approximation factor or to apply a different algorithmic technique with lower running time. It would also be interesting to examine whether our strategy can be generalized to find polynomial-time approximation algorithms for other covering shapes, such as unit disks or convex shapes of fixed size. 







\bibliography{ply}

\end{document}